\pgfplotsset{
    compat=newest,
}
\DeclareMathOperator*{\argmax}{arg\,max}
\DeclareMathOperator{\atantwo}{atan2}
\DeclareMathOperator{\dist}{\mathrm{dist}}
\newcommand{\x}{{\bm{x}}}
\newcommand{\p}{{\bm{p}}}
\newcommand{\mbx}{\mathbf{x}}
\newcommand{\mbf}{\mathbf{f}}
\newcommand{\cA}{\mathcal{A}}
\newcommand{\cB}{\mathcal{B}}
\newcommand{\cR}{\mathcal{R}}
\newcommand{\cT}{\mathcal{T}}
\newcommand{\cX}{\mathcal{X}}
\newcommand{\bR}{\mathbb{R}}
\newcommand{\mT}{\mathfrak{T}}
\theoremstyle{definition}
\newtheorem{remark}{Remark}
\newtheorem{theorem}{Theorem}
\title{\LARGE \bf
Two-Player Reconnaissance Game with Half-Planar Target and Retreat Regions
}
\author{Yoonjae Lee \and Efstathios Bakolas \thanks{Y. Lee (PhD student) and E. Bakolas (Associate Professor) are with the Department of Aerospace Engineering
and Engineering Mechanics, The University of Texas at Austin,
Austin, Texas 78712-1221, USA, Emails: yol033@utexas.edu; bakolas@austin.utexas.edu}}
\begin{document}

\maketitle
\thispagestyle{empty}
\pagestyle{empty}

\begin{abstract}

This paper is concerned with the reconnaissance game that involves two mobile agents: the Intruder and the Defender. The Intruder is tasked to reconnoiter a territory of interest (target region) and then return to a safe zone (retreat region), where the two regions are disjoint half-planes, while being chased by the faster Defender. This paper focuses on the scenario where the Defender is not guaranteed to capture the Intruder before the latter agent reaches the retreat region. The goal of the Intruder is to minimize its distance to the target region, whereas the Defender's goal is to maximize the same distance. The game is decomposed into two phases based on the Intruder's myopic goal. The complete solution of the game corresponding to each phase, namely the Value function and state-feedback equilibrium strategies, is developed in closed-form using differential game methods. Numerical simulation results are presented to showcase the efficacy of our solutions.

\end{abstract}

\section{Introduction} \label{sec:intro}

Pursuit-evasion and target defense games have received a significant amount of attention due to their close connection with a wide range of applications in, for example, aerospace, military, and robotics. Not many attempts, however, have yet been made to address hybrid problems that possess both pursuit-evasion and target defense aspects, such as aerial reconnaissance and coast guarding problems. In these problems, the goal of an intruding agent is not necessarily to attack or reach a target, but to perform reconnaissance tasks in the vicinity of it and then escape to a safe zone before being neutralized by its opponent. This paper presents how to formulate this type of problems as a two-staged differential game and find equilibrium (i.e., worst-case) control policies for the agents involved.

\textit{Literature review}: The study of pursuit-evasion games was initiated by Isaacs in his seminal work \cite{isaacs1965differential}. Recent advances in the field of pursuit-evasion games are summarized in \cite{weintraub2020introduction}. One remarkable variant of pursuit-evasion games is the so-called target guarding game, the two-player version of which was first discussed in \cite{isaacs1965differential}. For the past few years, target guarding games with a stationary target or territory have been extensively studied. The related work in the literature can be categorized based on the shape of the target under consideration, which can be a point \cite{li2011defending,chipade2019multiplayer}, a line \cite{yan2018reach,von2020multiple,garcia2020multiple}, a circle \cite{bajaj2019dynamic}, or an arbitrary convex set \cite{fu2020guarding,lee2021optimal,lee2021guarding,fu2023justification}. One class of target guarding games that has recently attracted the attention of many researchers is the so-called perimeter-defense game \cite{shishika2018local,von2022circular}, which involves a defender(s) whose state is constrained along the boundary or perimeter of a target. The turret-defense game, which is a special case that has a circular target, has recently been studied in \cite{von2022turret}. Another remarkable class of target guarding games is the active target defense game, which involves a mobile target \cite{das2022guarding} or a maneuverable target (i.e., an evader)~\cite{liang2019differential,8340791,liang2022reconnaissance}.

In target guarding games, an intruder aims at reaching a target or, if capture is unavoidable, approach the target as close as possible until the time of capture. This problem formulation is relevant to the problem considered in this paper, namely the reconnaissance game. Similar to target guarding games, the reconnaissance game involves an intruder whose goal is to minimize its distance to the target. The difference, however, is that the intruder in the latter game has an additional goal which is to enter a safe zone before being captured by the defender. The fact that the intruder has dual goals makes the game difficult to analyze with the classical differential game approach. To our best knowledge, \cite{plante1972reconnaissance} was the first work that studied an open-loop solution of the reconnaissance game with two agents and a point target using the method of game decomposition. This method has later been revisited for many other problems that have more than one stages, such as the fixed-course target observation problem \cite{10018228}, the engage or retreat game \cite{fuchs2016generalized}, and the capture-the-flag game \cite{huang2014automation,garcia2018capture}. Note that the key difference between the capture-the-flag game and the reconnaissance game is that the latter game always terminates with the intruder entering the safe zone, whereas the former game has no such hard terminal constraint.


\textit{Statement of contributions}: The main contributions of this paper are as follows. Compared to \cite{plante1972reconnaissance} in which only the point target case was considered and an open-loop solution was developed in part based on iterative search methods, we develop the complete solution for the reconnaissance game involving two agents and half-planar target and retreat regions completely analytically. In particular, our method yields the solution to the game, namely the Value function and state-feedback equilibrium strategies of the game, in closed-form instead of relying on algorithmic or numerical methods as in \cite{plante1972reconnaissance}. Furthermore, we rigorously verify the validity of our solution using the Hamilton-Jacobi-Isaacs (HJI) equation, a step that has also been absent in \cite{plante1972reconnaissance}. Lastly, using the analytical expression of the Value function, we show how to divide the state space of the game into the winning sets of each agent and also how to characterize the barriers that demarcate these winning sets.

\textit{Outline}: The rest of the paper is structured as follows. In Section \ref{sec:probform}, the reconnaissance game is formulated and decomposed into two phases. In Sections \ref{sec:escapegame} and \ref{sec:targetgame}, the solution of the game corresponding to each phase is developed based on differential game methods. In Section \ref{sec:simresults}, simulation results are presented. Finally, in Section \ref{sec:conclusions}, concluding remarks are provided.

\section{Problem Formulation} \label{sec:probform}

In this section, the two-player reconnaissance game that takes place in $\bR^2$ is formulated. The game involves two mobile agents, the Intruder ($I$) and the Defender ($D$), whose equations of motion are given as
\begin{align}
    \dot x_I &= \cos \phi, & x_I(0) &= x_I^0, \label{eq:xI}
    \\
    \dot y_I &= \sin \phi, & y_I(0) &= y_I^0, \label{eq:yI}
    \\
    \dot x_D &= \alpha \cos \psi, & x_D(0) &= x_D^0, \label{eq:xD}
    \\
    \dot y_D &= \alpha \sin \psi, & y_D(0) &= y_D^0, \label{eq:yD}
\end{align}
 where $[x_I,y_I]^\top \in \bR^2$, $[x_I^0,y_I^0]^\top \in \bR^2 \backslash (\cT \cup \cR)$, and $\phi \in [-\pi,\pi]$ (resp., $[x_D,y_D]^\top \in \bR^2$, $[x_D^0,y_D^0]^\top \in \bR^2 \backslash \cR$, and $\psi \in [-\pi,\pi]$) denote the state/position, initial state/position, and control input/heading angle of the Intruder (resp., Defender), respectively; note that the sets $\cT$ and $\cR$ will be defined shortly. The scalar $\alpha$ denotes the speed of the Defender or the speed ratio, where $\alpha > 1$ (in other words, the Defender is faster than the Intruder). The dynamics of the game can be written as
\begin{align}
    \dot\mbx = \mbf(\mbx,\phi,\psi), \qquad \mbx(0) = \mbx^0,
\end{align}
where $\mbx = [x_I,y_I,x_D,y_D]^\top \in \bR^4$ is the game state, $\mbx^0 = [x_I^0,y_I^0,x_D^0,y_D^0]^\top \in \bR^4$ is the initial game state, and $\mbf : \bR^4 \times [-\pi,\pi] \times [-\pi,\pi] \rightarrow \bR^4$ is the (continuously differentiable) vector field of the game dynamics.

In this game, the Intruder is interested in reconnoitering (i.e., minimizing its distance to) the \textit{target region} $\cT$:
\begin{align}
    \cT = \left\{ [x,y]^\top \in \bR^2 : y \geq l \right\},
\end{align}
where $l > 0$. After performing the reconnaissance task, the same agent is required to return to the \textit{retreat region} $\cR$:
\begin{align} \label{eq:retreatregion}
    \cR = \left\{ [x,y]^\top \in \bR^2 : y \leq 0 \right\}.
\end{align}
By construction, $\cT$ and $\cR$ are closed half-spaces in $\bR^2$ (i.e., half-planes) that are disjoint. The Intruder is said to win the game if it enters $\cR$, which is equivalent to say, if the game state reaches the terminal manifold $\mT_r$:
\begin{align}
    \mT_r &= \left\{ \mbx \in \bR^4 : y_I \leq 0 \right\}. \label{eq:Tr}
\end{align}
The Defender, on the other hand, is considered to win if it captures (with zero capture radius) the Intruder outside $\cR$. The terminal manifold for capture, $\mT_c$, is defined as
\begin{align}
    \mT_c = \left\{ \mbx \in \bR^4 \backslash \mT_r : (x_I-x_D)^2+(y_I-y_D)^2 = 0 \right\}. \label{eq:Tc}
\end{align}
Note from \eqref{eq:Tr} and \eqref{eq:Tc} that the Intruder is considered to win if it is ``captured'' on the boundary of $\cR$.

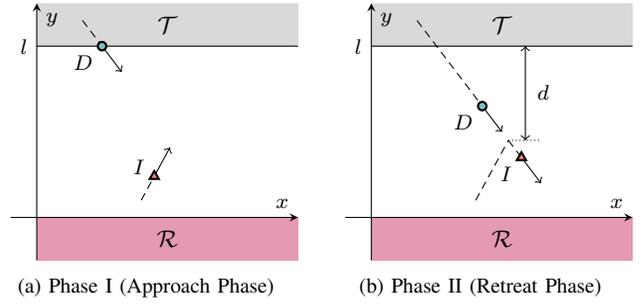
\begin{figure}
    \centering
    \subcaptionbox{Phase I (Approach Phase) \label{fig:1a}}{
    \begin{tikzpicture}
    \begin{axis}[
        font=\footnotesize,
        scatter/classes={
        a={mark=*,draw=black,thick,fill=teal!50!white,scale=0.8},
        b={mark=triangle*,draw=black,thick,fill=red!50!white,scale=1}
        },
        height=5cm, width=5.4cm,
        axis on top,
        xlabel=$x$, ylabel=$y$,
        xlabel style={anchor=south},
        ylabel style={anchor=south},
        axis x line=center,
        axis y line=center,
        xmin=-0.2, xmax=2,
        ymin=-0.25, ymax=1.25,
        xtick=\empty, ytick=\empty,
        ]
        \addplot [
        scatter,
        only marks,
        scatter src=explicit symbolic,
        ] coordinates {
        (0.5,1.0) [a]
        (0.9,0.24) [b]
        };
        \addplot [
        fill=purple!40!white,
        draw=none
        ] coordinates {
        (0,0) (2,0) (2,-0.25) (0,-0.25)
        }; 
        \addplot [
        fill=lightgray!60,
        draw=none
        ] coordinates {
        (0,1) (2,1) (2,1.25) (0,1.25)
        };  
        \addplot [
        black,
        domain=0:2,
        samples=10,
        ]
        {1};
        \draw[->](0.9,0.24)--(1.02,0.4080); 
        \draw[->](0.5,1.0)--(0.65,0.85); 
        \node [anchor=center,font=\small] at (1,1.125) {$\cT$}; 
        \node [anchor=center,font=\small] at (1,-0.125) {$\cR$}; 
        \node [anchor=north east,font=\footnotesize] at (0.5,1) {$D$}; 
        \node [anchor=south east,font=\footnotesize] at (0.9,0.2) {$I$}; 
        \draw[densely dashed,black](0.8,0.1)--(0.9,0.24); 
        \draw[densely dashed,black](0.35,1.15)--(0.5,1.0); 
        \node [anchor=east,font=\footnotesize] at (0,1) {$l$};
        \end{axis}
    \end{tikzpicture}
    }
    ~
    \subcaptionbox{Phase II (Retreat Phase) \label{fig:1b}}{
    \begin{tikzpicture}
    \begin{axis}[
        font=\footnotesize,
        scatter/classes={
        a={mark=*,draw=black,thick,fill=teal!50!white,scale=0.8},
        b={mark=triangle*,draw=black,thick,fill=red!50!white,scale=1,}
        },
        height=5cm, width=5.4cm,
        axis on top,
        xlabel=$x$, ylabel=$y$,
        xlabel style={anchor=south},
        ylabel style={anchor=south},
        axis x line=center,
        axis y line=center,
        xmin=-0.2, xmax=2,
        ymin=-0.25, ymax=1.25,
        xtick=\empty, ytick=\empty,
        ]
        \addplot [
        scatter,
        only marks,
        scatter src=explicit symbolic,
        ] coordinates {
        (0.85,0.65) [a]
        (1.15,0.35) [b]
        };
        \addplot [
        fill=purple!40!white,
        draw=none
        ] coordinates {
        (0,0) (2,0) (2,-0.25) (0,-0.25)
        };  
        \addplot [
        fill=lightgray!60,
        draw=none
        ] coordinates {
        (0,1) (2,1) (2,1.25) (0,1.25)
        };  
        \addplot [
        black,
        domain=0:2,
        samples=10,
        ]
        {1};
        \node [anchor=north east,font=\footnotesize] at (0.85,0.65) {$D$}; 
        \node [anchor=north east,font=\footnotesize] at (1.15,0.35) {$I$}; 
        \draw[->](1.15,0.35)--(1.3,0.2); 
        \draw[->](0.85,0.65)--(1.0,0.5); 
        \node [anchor=center,font=\small] at (1,1.125) {$\cT$}; 
        \node [anchor=center,font=\small] at (1,-0.125) {$\cR$}; 
        \draw[densely dashed,black](0.8,0.1)--(1.05,0.45); 
        \draw[densely dashed,black](1.05,0.45)--(1.15,0.35); 
        \draw[densely dashed,black](0.35,1.15)--(0.85,0.65); 
        \draw[<->](1.18,0.45)--(1.18,1);
        \draw[densely dotted](1.05,0.45)--(1.3,0.45);
        \node [anchor=west,font=\footnotesize] at (1.2,0.73) {$d$}; 
        \node [anchor=east,font=\footnotesize] at (0,1) {$l$};
        \end{axis}
    \end{tikzpicture}
    }
    \caption{Illustration of the two phases of the reconnaissance game. During Phase I, the Intruder (red triangle) approaches the target region $\cT$ (gray half-plane) to minimize its distance from the latter region as much as possible in the presence of the Defender (teal circle). When the Intruder realizes it should no longer proceed or otherwise capture may occur before entering the retreat region $\cR$ (purple half-plane), the game transitions to Phase II, in which the Intruder's goal is to arrive at the point in $\cR$ where the distance between the two agents is at its maximum. Note that the resulting payoff of the illustrated case is $d$.}
    \label{fig:recongame}
\end{figure}

Since the game of our interest requires that the Intruder return to $\cR$ before capture occurs, we will focus on the scenario where the game state ends up in $\mT_r$. In this case, the objective of the Intruder is to minimize its distance to $\cT$ (at some time instant during the game) as much as possible in the presence of the Defender that strives to maximize the same distance. The payoff functional of the reconnaissance game can thus be defined by
\begin{align} \label{eq:payoff}
    J &= \min_{t \in [0,t_f]} \dist([x_I(t),y_I(t)]^\top,\cT),
\end{align}
where $t_f = \inf \left\{ t \geq 0 : \mbx \in \mT_r \right\}$ is the final time and the function $\dist : \bR^2 \times 2^{\bR^2} \rightarrow [0,\infty)$ measures the (minimum) distance between a point and a subset of $\bR^2$. The Value of the game is
\begin{align} \label{eq:value}
    V(\mbx^0) = \min_{\phi(\cdot)}\max_{\psi(\cdot)} J,
\end{align}
where $\phi(\cdot)$ and $\psi(\cdot)$ denote the state-feedback strategies of each agent.

The goal of this paper is to find 1) the winning regions of each agent and 2) the Value function of the game, $V$, and the corresponding state-feedback equilibrium strategies, $\phi^\star(\cdot)$ and $\psi^\star(\cdot)$. By definition, the pair of equilibrium strategies must satisfy the saddle-point condition
\begin{align} \label{eq:sadpointcond}
    &J(\phi^\star(\cdot),\psi(\cdot);\mbx^0) \leq J(\phi^\star(\cdot),\psi^\star(\cdot);\mbx^0) \nonumber
    \\
    &\qquad\qquad \leq J(\phi(\cdot),\psi^\star(\cdot);\mbx^0),
\end{align}
for all possible $\phi(\cdot)$ and $\psi(\cdot)$.

Due to the non-integral payoff functional in \eqref{eq:payoff}, it is difficult to analyze the game in its current form. For this reason, as suggested in \cite{plante1972reconnaissance}, we will decompose the game into two phases based on the Intruder's myopic goal: approach or retreat. See Figure \ref{fig:recongame} for the illustration of Phase I (approach phase) and Phase II (retreat phase). In the following sections, we will solve Phase II first and then Phase I, as the solution of Phase I depends partially on that of Phase II.

\section{Phase II: Retreat Phase} \label{sec:escapegame}

Suppose at some time $t_s \in [0,t_f]$, the Intruder has arrived at the point closest to $\cT$. Simultaneously, the game transitions from Phase I to Phase II. In the latter phase, in order to reduce the chance of capture, the Intruder desires to maximize its distance from the Defender at the moment it reaches a point in $\cR$, which we refer to as a \textit{retreat point}. Conversely, the Defender attempts to increase the chance of capture by minimizing the same distance. The payoff functional of this phase is therefore defined as
\begin{align} \label{eq:J_II}
    J_\mathrm{II} = \sqrt{(x_I^f-x_D^f)^2+(y_I^f-y_D^f)^2},
\end{align}
where $\mbx^f = [x_I^f,y_I^f,x_D^f,y_D^f]^\top$ is the game state at $t=t_f$. Note that the game defined by the payoff \eqref{eq:J_II} and terminal manifold \eqref{eq:Tr} has in fact been studied thoroughly \cite{yan2018reach}, \cite{lee2021optimal}. For this reason, we will provide below the existing solution of this game, which is a slight modification of the results from the latter references.

\begin{theorem} \label{theo:1}
    Consider the game defined by the kinematics \eqref{eq:xI}-\eqref{eq:yD}, payoff \eqref{eq:J_II}, and terminal manifold \eqref{eq:Tr}. Define the function $\Delta : \bR^4 \times \bR^2 \rightarrow \bR$ as
    \begin{align} \label{eq:dist}
        \Delta(\mbx,\x) &= \sqrt{(x-x_D)^2+(y-y_D)^2} \nonumber
        \\
        &\quad - \alpha \sqrt{(x-x_I)^2+(y-y_I)^2},
    \end{align}
    where $\x = [x,y]^\top \in \bR^2$. Then, the function $V_\mathrm{II}(\mbx) : \bR^4 \rightarrow \bR$ where
    \begin{align} \label{eq:V_II}
        V_\mathrm{II}(\mbx) = \max_{\p \in \cR} \Delta(\mbx,\p)
    \end{align}
    is the Value function of the game over the domain
    \begin{align}
        \cX_\mathrm{II} = \left\{ \mbx \in \bR^4 : V_\mathrm{II}(\mbx) \geq 0 \right\},
    \end{align}
    and the corresponding state-feedback equilibrium strategies are given as
    \begin{align}
        \phi_\mathrm{II}^\star(\mbx) &= \atantwo(p_y^\star - y_I,p_x^\star-x_I), \label{eq:phi_II}
        \\
        \psi_\mathrm{II}^\star(\mbx) &= \atantwo(p_y^\star-y_D,p_x^\star-x_D), \label{eq:psi_II}
    \end{align}
    where $\p^\star = [p_x^\star,p_y^\star]^\top$ is the optimal retreat point given by
    \begin{align} \label{eq:pstar}
        \p^\star = \argmax_{\p \in \cR} \Delta(\mbx,\p).
    \end{align}
\end{theorem}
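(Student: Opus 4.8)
The plan is to verify that the proposed $V_\mathrm{II}$ and strategies constitute a saddle point by invoking the Hamilton-Jacobi-Isaacs (HJI) framework. Since $V_\mathrm{II}(\mbx)=\max_{\p\in\cR}\Delta(\mbx,\p)$ is defined as a pointwise maximum of an explicit smooth function, the natural approach is a verification argument: I would first check the terminal boundary condition, then confirm that $V_\mathrm{II}$ satisfies the HJI partial differential equation with the indicated strategies realizing the min-max, and finally argue that playing these strategies drives the state to the terminal manifold so that the Value is attained.

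\textbf{Terminal condition.} First I would verify that on the terminal manifold $\mT_r$ (where $y_I\le 0$, so the Intruder is already in $\cR$), the optimal retreat point is the Intruder's own position, giving $\Delta(\mbx,[x_I,y_I]^\top)=\sqrt{(x_I-x_D)^2+(y_I-y_D)^2}$, which matches the payoff $J_\mathrm{II}$ in \eqref{eq:J_II}. The key observation is that because $\alpha>1$, moving the evaluation point $\p$ toward $[x_I,y_I]^\top$ increases the $-\alpha\|\p-[x_I,y_I]^\top\|$ term faster than it can decrease the $\|\p-[x_D,y_D]^\top\|$ term, so when the Intruder is on the boundary of $\cR$ the maximizing $\p$ coincides with the Intruder, yielding consistency with the terminal payoff.

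\textbf{HJI equation and the envelope structure.} The core step is to show $V_\mathrm{II}$ solves $\min_\phi\max_\psi \nabla V_\mathrm{II}^\top \mbf(\mbx,\phi,\psi)=0$ with the minimizing $\phi$ and maximizing $\psi$ equal to \eqref{eq:phi_II}-\eqref{eq:psi_II}. I would use the envelope theorem: since $\p^\star$ is the interior maximizer of the smooth $\Delta(\mbx,\cdot)$, the gradient of $V_\mathrm{II}$ with respect to $\mbx$ equals the partial gradient $\partial_\mbx \Delta(\mbx,\p^\star)$ evaluated at the maximizer, so I may differentiate $\Delta$ treating $\p^\star$ as fixed. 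Computing $\partial\Delta/\partial[x_I,y_I]$ and $\partial\Delta/\partial[x_D,y_D]$ yields gradient components pointing along the lines connecting each agent to $\p^\star$; substituting into the Hamiltonian and minimizing over $\phi$, maximizing over $\psi$ (each a linear function of the corresponding heading's unit vector) recovers exactly the $\atantwo$ expressions, and the resulting stationary value of the Hamiltonian will be shown to vanish identically. I expect the main obstacle to lie here, in the non-smooth or boundary cases: the envelope argument assumes a unique interior maximizer $\p^\star$, but when $\p^\star$ lies on the boundary $\bd\cR=\{y=0\}$ (the relevant case for an active retreat), $V_\mathrm{II}$ may fail to be differentiable, so I would need to handle the constrained maximization carefully, likely via the KKT conditions for \eqref{eq:pstar} and by checking that the one-sided directional derivatives still yield a consistent (viscosity) solution of the HJI equation.

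\textbf{Optimality and attainment.} Finally I would close the saddle-point argument \eqref{eq:sadpointcond}: since both agents aim their velocities straight at the common point $\p^\star$, the vector $\p^\star$ remains the instantaneous optimizer of $\Delta$ along the equilibrium trajectory, and the terminal payoff equals $V_\mathrm{II}(\mbx^0)$. For the inequalities, fixing $\psi=\psi_\mathrm{II}^\star$ and letting the Intruder deviate cannot increase the attainable separation at the retreat point (and symmetrically for a Defender deviation), which follows from the standard argument that $V_\mathrm{II}$ is non-increasing (resp. non-decreasing) along trajectories when one player deviates unilaterally. Since this result is a modification of \cite{yan2018reach,lee2021optimal}, I would lean on those references for the reachability-based justification that the equilibrium trajectories indeed terminate on $\mT_r$ and remain in $\cX_\mathrm{II}$.
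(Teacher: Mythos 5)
The paper does not actually prove Theorem~\ref{theo:1}: its ``proof'' is a one-line citation to \cite{lee2021guarding}, since the statement is a known result (the reach-avoid/safe-retreat game with a half-plane goal set). Your verification sketch is therefore not comparable to anything in this paper; it is an attempt to reconstruct the argument of the cited reference. The overall architecture you propose is the right one, and your two strongest points are correct and nontrivial: (i) the terminal consistency check, where you correctly argue that because $\alpha>1$ the unconstrained maximizer of $\Delta(\mbx,\cdot)$ is the Intruder's own position, so $V_\mathrm{II}=J_\mathrm{II}$ on $\mT_r$; and (ii) the use of Danskin's/the envelope theorem to differentiate $V_\mathrm{II}$ through the $\max$. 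On the latter, you worry more than necessary: Danskin's theorem does not require an \emph{interior} maximizer, only a unique one, and since the constraint set $\cR$ is independent of $\mbx$ the identity $\nabla_\mbx V_\mathrm{II}=\nabla_\mbx\Delta(\mbx,\p^\star)$ holds even though $\p^\star$ sits on $\bd\cR$ (which it always does while $y_I>0$, and where it is unique because a superlevel disk of $\Delta$ touches a line at one point). The genuinely delicate locus is not $\bd\cR$ but the set where the maximizer could fail to be unique, which does not occur here.

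The one concrete error is the orientation of the min--max in your HJI step. In Phase~II the Intruder \emph{maximizes} the terminal separation $J_\mathrm{II}$ and the Defender \emph{minimizes} it (the roles are reversed relative to the outer game \eqref{eq:value}), so the relevant Isaacs equation is $\max_\phi\min_\psi\{\nabla V_\mathrm{II}\cdot\mbf\}=0$. With the gradients
\begin{align}
    \nabla_{[x_I,y_I]}\Delta=\alpha\,\frac{\p^\star-[x_I,y_I]^\top}{\|\p^\star-[x_I,y_I]^\top\|},\qquad \nabla_{[x_D,y_D]}\Delta=-\frac{\p^\star-[x_D,y_D]^\top}{\|\p^\star-[x_D,y_D]^\top\|},
\end{align}
the \emph{maximizing} $\phi$ and \emph{minimizing} $\psi$ both point toward $\p^\star$, the two extremized terms contribute $+\alpha$ and $-\alpha$, and the Hamiltonian vanishes, recovering \eqref{eq:phi_II}--\eqref{eq:psi_II}. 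If you literally ``minimize over $\phi$ and maximize over $\psi$'' as written, you obtain headings pointing \emph{away} from $\p^\star$ and a nonzero Hamiltonian, contradicting the claimed strategies. Beyond that sign issue, your sketch still defers the two facts that make the verification close: that $\p^\star$ is invariant along the equilibrium trajectory (so the open-loop and feedback solutions coincide and the payoff $V_\mathrm{II}(\mbx^0)$ is actually attained), and that trajectories starting in $\cX_\mathrm{II}$ reach $\mT_r$ before $\mT_c$; you correctly identify these but lean on \cite{yan2018reach,lee2021optimal} for them, which is acceptable given that the paper itself does the same for the entire theorem.
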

\begin{proof}
    See \cite{lee2021guarding}.
\end{proof}

\begin{remark}
    The function $\Delta$ computes the distance between the two agents when the Intruder reaches a retreat point, provided that both agents move along straight-line trajectories. $\cX_\mathrm{II}$ is the subset of the game state space in which the game corresponding to Phase II can take place (i.e., if $\mbx^0 \in \bR^4 \backslash \cX_\mathrm{II}$, the Defender can capture the Intruder before it reaches $\cR$). Lastly, although $V_\mathrm{II}$ in \eqref{eq:V_II} is written in general form, the boundary of $\cX_\mathrm{II}$ has an analytic expression as $\cR$ is a half-plane, for which the reader can refer to \cite{yan2018reach}.
\end{remark}

Before closing this section, we introduce a geometric definition that is relevant our problem. The set comprised of points that the Intruder can reach faster than or as fast as the Defender (i.e., the safe-reachable set of the Intruder) is a closed disk
\begin{align} \label{eq:apolcirc}
    \cA = \left\{ \x \in \bR^2 : \Delta(\mbx,\x) \geq 0 \right\},
\end{align}
whose center and radius are given by
\begin{align}
    \bm{c} &= \left[ \frac{\alpha^2 x_I - x_D}{\alpha^2-1},\frac{\alpha^2 y_I - y_D}{\alpha^2-1} \right]^\top, \label{eq:center}
    \\
    r &= \frac{\alpha}{\alpha^2-1} \sqrt{ (x_I-x_D)^2+(y_I-y_D)^2 }, \label{eq:radius}
\end{align}
respectively.

\section{Phase I: Approach Phase} \label{sec:targetgame}

In Phase I, the Intruder and the Defender compete to minimize and maximize the distance between $\cT$ and the Intruder's position at $t_s$, respectively. Since $\cT$ is a half-plane, the payoff functional of this phase can be defined by rewriting \eqref{eq:payoff} as
\begin{align} \label{eq:J_I}
    J_\mathrm{I} &= \Phi_\mathrm{I}(\mbx^s) = l - y_I^s,
\end{align}
where $\mbx^s = [x_I^s,y_I^s,x_D^s,y_D^s]^\top \in \cX_\mathrm{II}$ is the game state at $t=t_s$. Note that, without loss of generality, we allow \eqref{eq:J_I} to attain negative values, in which case the Intruder will enter $\cT$. The terminal constraint of this phase, which ensures safe retreat of the Intruder, is designed as follows:
\begin{align} \label{eq:psi_T}
    \Psi_\mathrm{I}(\mbx^s) &= \Delta(\mbx^s,\p) - \delta = 0,
\end{align}
where $\p = [p_x,p_y]^\top \in \cR$ is a retreat point and $\delta \geq 0$ is the distance between the two agents at the end of Phase II.

\begin{remark}
    To incorporate the solution of Phase II herein, $\p$ and $\delta$ in \eqref{eq:psi_T} should be defined as functions of $\mbx$. This approach, however, is intractable as these two variables admit no closed-form expressions. To overcome this difficulty, we let $\p$ and $\delta$ be parameters independent of $\mbx$. With this relaxation, one can develop an open-loop solution of the game for an arbitrary selection of these two parameters. Thereafter, the optimal pair that results in a solution in accordance with the solution of Phase II can be found.
\end{remark}

\subsection{Necessary Conditions for Optimality}

Let us first analyze the game defined by the payoff \eqref{eq:J_I} and terminal constraint \eqref{eq:psi_T} using the first-order necessary conditions for optimality. The Hamiltonian of this game is
\begin{align} \label{eq:hamiltonian}
    H &= \lambda_{x_I} \cos \phi + \lambda_{y_I} \sin \phi \nonumber
    \\
    &\quad + \alpha \left( \lambda_{x_D} \cos \psi + \lambda_{y_D} \sin \psi \right),
\end{align}
where $\lambda_{x_I},\lambda_{y_I},\lambda_{x_D}$, and $\lambda_{y_D}$ are the co-state variables. From \eqref{eq:hamiltonian}, one can infer that the open-loop equilibrium heading angles of the agents for Phase I, denoted by $\phi_\mathrm{I}^*$ and $\psi_\mathrm{I}^*$, satisfy
\begin{align}
    \cos \phi_\mathrm{I}^* &= -\frac{\lambda_{x_I}}{\sqrt{\lambda_{x_I}^2+\lambda_{y_I}^2}}, & \sin \phi_\mathrm{I}^* &= -\frac{\lambda_{y_I}}{\sqrt{\lambda_{x_I}^2+\lambda_{y_I}^2}}, \label{eq:optinput_I}
    \\
    \cos \psi_\mathrm{I}^* &= \frac{\lambda_{x_D}}{\sqrt{\lambda_{x_D}^2+\lambda_{y_D}^2}}, & \sin \psi_\mathrm{I}^* &= \frac{\lambda_{y_D}}{\sqrt{\lambda_{x_D}^2+\lambda_{y_D}^2}}, \label{eq:optinput_D}
\end{align}
for all $t \in [0,t_s]$. The equilibrium dynamics of the co-state variables are
\begin{align}
    \dot\lambda_{x_I} &= - \frac{\partial H}{\partial x_I} = 0, & \dot\lambda_{y_I} &= - \frac{\partial H}{\partial y_I} = 0,
    \\
    \dot\lambda_{x_D} &= - \frac{\partial H}{\partial x_D} = 0, & \dot\lambda_{y_D} &= - \frac{\partial H}{\partial y_D} = 0,
\end{align}
subject to the terminal boundary conditions
\begin{align}
    \lambda_{x_I}(t_s) &= \frac{\partial \Phi_\mathrm{I}(\mbx^s)}{\partial x_I} + \nu \frac{\partial \Psi_\mathrm{I}(\mbx^s)}{\partial x_I} = \alpha \nu \cos\theta_I^s, \label{eq:lamb_xI}
    \\
    \lambda_{y_I}(t_s) &= \frac{\partial \Phi_\mathrm{I}(\mbx^s)}{\partial y_I} + \nu \frac{\partial \Psi_\mathrm{I}(\mbx^s)}{\partial y_I} = -1 + \alpha \nu \sin\theta_I^s, \label{eq:lamb_yI}
    \\
    \lambda_{x_D}(t_s) &= \frac{\partial \Phi_\mathrm{I}(\mbx^s)}{\partial x_D} + \nu \frac{\partial \Psi_\mathrm{I}(\mbx^s)}{\partial x_D} = -\nu \cos\theta_D^s,
    \\
    \lambda_{y_D}(t_s) &= \frac{\partial \Phi_\mathrm{I}(\mbx^s)}{\partial y_D} + \nu \frac{\partial \Psi_\mathrm{I}(\mbx^s)}{\partial y_D} = -\nu \sin\theta_D^s,
\end{align}
where $\nu \in \bR$ is the Lagrange multiplier and
\begin{align}
    \theta_I^s &= \atantwo \left( p_y - y_I^s,p_x - x_I^s \right), \label{eq:theta_I}
    \\
    \theta_D^s &= \atantwo \left( p_y-y_D^s,p_x-x_D^s \right). \label{eq:theta_D}
\end{align}
Clearly, for all $t \in [0,t_s]$, we have $\lambda_{x_I}(t) = \lambda_{x_I}(t_s)$, $\lambda_{y_I}(t) = \lambda_{y_I}(t_s)$, $\lambda_{x_D}(t) = \lambda_{x_D}(t_s)$, and $\lambda_{y_D}(t) = \lambda_{y_D}(t_s)$. Using the fact that $H^*(t_s)=0$ yields
\begin{align}
    &H^*(t_s) = -\sqrt{ \alpha^2 \nu^2 - 2\alpha\nu\sin\theta_I^s + 1 } + \alpha |\nu| = 0 \nonumber
    \\
    &\Rightarrow \nu = \frac{\csc\theta_I^s}{2\alpha}. \label{eq:lagrange}
\end{align}
Substituting \eqref{eq:lagrange} into \eqref{eq:lamb_xI} and \eqref{eq:lamb_yI} gives
\begin{align}
    \lambda_{x_I} = \frac{1}{2} \cot\theta_I^s, \qquad \lambda_{y_I} = -\frac{1}{2}.
\end{align}
Consequently, \eqref{eq:optinput_I} and \eqref{eq:optinput_D} can be rewritten as
\begin{align}
    \cos \phi_\mathrm{I}^* &= \cos\theta_I^s, &\sin \phi_\mathrm{I}^* &= -\sin\theta_I^s,   \label{eq:phi_I_ol}
    \\
    \cos \psi_\mathrm{I}^* &= \cos\theta_D^s, & \sin \psi_\mathrm{I}^* &= \sin\theta_D^s. \label{eq:psi_I_ol}
\end{align}

\begin{remark} \label{rem:defopthead}
    From \eqref{eq:psi_II} and \eqref{eq:psi_I_ol}, one can deduce that, if $\p$ is the optimal retreat point yielded by \eqref{eq:pstar}, then the open-loop equilibrium heading angles of the Defender in Phases I and II are the same, meaning that the trajectory of the Defender is a straight line that connects its initial and terminal positions.
\end{remark}

\subsection{Optimal Selection of the Parameters} \label{sec:optselection}

The open-loop solution derived above is optimal only for the certain pair of $\p$ and $\delta$ selected. That is, there may exist a different pair that yields the smaller Value. It is therefore remained to show how $\p$ and $\delta$ should be selected in such a way that the ``optimal'' solution, which not only minimizes \eqref{eq:J_I} (from the perspective of the Intruder) but also aligns with the equilibrium solution for Phase II, can be induced. To answer this, let us define
\begin{align}
    X_I^0 &= p_x - x_I^0, \qquad Y_I^0 = p_y - y_I^0,
    \\
    \eta_D^0 &= \sqrt{(p_x-x_D^0)^2+(p_y-y_D^0)^2}.
\end{align}
Suppose $\p$ satisfies \eqref{eq:pstar} at $t_s$. Then, in light of Remark \ref{rem:defopthead}, we may write
\begin{align} \label{eq:tf}
    t_f = \frac{\eta_D^0 - \delta}{\alpha}.
\end{align}
To ensure $t_f \geq 0$, we need $\delta \in [0,\eta_D^0]$. From \eqref{eq:phi_II} and \eqref{eq:phi_I_ol}, one can easily deduce that the Intruder will move with constant speed along the $x$ direction for all $t \in [0,t_f]$, while the sign of its $y$-directional motion will switch at $t_s$. That is,
\begin{align}
    X_I^0 &= t_f \cos\theta_I^s \Rightarrow \cos\theta_I^s = \frac{X_I^0}{t_f}, \label{eq:costheta}
    \\
    Y_I^0 &= -t_s \sin\theta_I^s + (t_f - t_s) \sin\theta_I^s \nonumber
    \\
    &\quad \Rightarrow \sin\theta_I^s = \frac{Y_I^0}{t_f - 2 t_s}. \label{eq:sintheta}
\end{align}
By using the trigonometric identity  $\sin^2\theta_I^s+\cos^2\theta_I^s = 1$, we obtain
\begin{align}
      &t_s = \frac{t_f}{2} \left( 1 \pm \frac{Y_I^0}{\sqrt{t_f^2 - (X_I^0)^2}} \right).
\end{align}
Since the $y$-coordinate of the Intruder's position is always greater than zero while the game is ongoing (otherwise, $\mbx \in \mT_r$), it must always hold that $\sin\theta_I^s < 0$. Thus,
\begin{align} \label{eq:ts}
    t_s = \frac{t_f}{2} \left( 1 + \frac{Y_I^0}{\sqrt{t_f^2 - (X_I^0)^2}} \right).
\end{align}
Substituting \eqref{eq:ts} into \eqref{eq:sintheta} gives
\begin{align} \label{eq:sintheta_2}
    \sin\theta_I^s &= -\sqrt{1 - \left( \frac{X_I^0}{t_f} \right)^2}.
\end{align}
One can subsequently find the $y$-coordinate of the Intruder's position at $t_s$ using \eqref{eq:ts} and \eqref{eq:sintheta_2}. The Value of the game, which is the distance between $\cT$ and the Intruder's position at $t_s$, is therefore given by
\begin{align}
    V_\mathrm{I}(\mbx^0) &= l - y_I^0 - \frac{1}{2} \left( \sqrt{t_f^2 - (X_I^0)^2} + Y_I^0 \right). \label{eq:valfun_open}
\end{align}
As mentioned, our goal is to find the pair $\p^* = [p_x^*,p_y^*]^\top \in \cR$ and $\delta^* \in [0,\eta_D^0]$ that minimize \eqref{eq:valfun_open}. Observe first from \eqref{eq:retreatregion} that $\p^*$ will always lie on the boundary of $\cR$, that is, $p_y^* = 0$. In addition, to minimize \eqref{eq:valfun_open}, $t_f$ should be maximized, so $\delta^* = 0$; note that this value implies that both agents will simultaneously arrive at $\p^*$, which however still leads to the Intruder's win, as mentioned in Section \ref{sec:probform}. Now, let us rewrite \eqref{eq:valfun_open} evaluated at $p_y^*$ and $\delta^*$ as a function of $p_x$:
\begin{align} \label{eq:F}
    F(p_x) = l - \frac{1}{2} \left( y_I^0 + \Gamma^{\frac{1}{2}} \right),
\end{align}
where $\Gamma = \frac{1}{\alpha^2}(p_x-x_D^0)^2 - (p_x-x_I^0)^2 + \frac{1}{\alpha^2}(y_D^0)^2$. Here, we implicitly assume that the domain of $F$ is restricted to the interval of $p_x$ where $\Gamma \geq 0$, which turns out to be compact. The first and second derivatives of $F$ are derived as
\begin{align}
    \frac{\partial F(p_x)}{\partial p_x} &= -\frac{1}{2\Gamma^{\frac{1}{2}}} \left( \frac{1-\alpha^2}{\alpha^2} p_x + x_I^0 - \frac{1}{\alpha^2} x_D^0 \right), \label{eq:firstder}
    \\
    \frac{\partial^2 F(p_x)}{\partial p_x^2} &= \frac{1}{2\Gamma^{\frac{3}{2}}} \left( \frac{1-\alpha^2}{\alpha^2} p_x + x_I^0 - \frac{1}{\alpha^2} x_D^0 \right)^2 + \frac{\alpha^2-1}{2 \alpha^2 \Gamma^{\frac{1}{2}}} . \label{eq:secondder}
\end{align}
Since the right-hand side of \eqref{eq:secondder} is always nonnegative, one can immediately conclude that $F$ is convex. The $p_x^*$ that makes the first derivative of $F$ zero can easily be found from \eqref{eq:firstder} as $p_x^* = \frac{\alpha^2 x_I^0 - x_D^0}{\alpha^2-1}$. In summary, we have obtained
\begin{align} \label{eq:optpt}
    \p^* = \left[ \frac{\alpha^2 x_I^0 - x_D^0}{\alpha^2-1},0 \right]^\top, \quad \delta^* = 0.
\end{align}

\begin{remark} \label{rem:proj}
    The point $\p^*$ given in \eqref{eq:optpt} is the orthogonal projection of the center of the disk $\cA^0 = \{\x \in \bR^2 : \Delta(\mbx^0,\x) \geq 0\}$ onto $\cR$ (see \eqref{eq:center}).
\end{remark}

Due to space limitations, the rigorous proof for showing that $\p^*$ satisfies \eqref{eq:pstar} is left for future work. The sketch of this proof is that the boundary of $\cR$ is tangent to the disk $\cA^s = \{\x \in \bR^2 : \Delta(\mbx^s,\x) \geq 0\}$ at $\p^*$, that is, $\Delta(\mbx^s,\p^*) = 0$ and $\Delta(\mbx^s,\p) < 0$ for all $\p \in \cR \backslash \{\p^*\}$.

\subsection{State-Feedback Equilibrium Solution}

The next result to be presented below is the state-feedback equilibrium solution of the game corresponding to Phase I, which we develop based on the open-loop solution derived in the previous subsections.

\begin{theorem} \label{theo:2}
    Consider the function $V_\mathrm{I} : \cX_\mathrm{II} \rightarrow \bR$ where
    \begin{align} \label{eq:valfun}
        V_\mathrm{I}(\mbx) &= l - \frac{1}{2} \left( y_I + \sqrt{ \frac{(x_I-x_D)^2}{\alpha^2-1} + \frac{y_D^2}{\alpha^2} } \right).
    \end{align}
    Then, $V_\mathrm{I}$ is the Value function of the game defined by the kinematics \eqref{eq:xI}-\eqref{eq:yD}, payoff \eqref{eq:J_I}, and terminal manifold \eqref{eq:Tr}, over the domain
    \begin{align}
        \cX_\mathrm{I} = \left\{ \mbx \in \cX_\mathrm{II} : V_\mathrm{I}(\mbx) \geq 0 \right\},
    \end{align}
    and the corresponding state-feedback equilibrium strategies are given as
    \begin{align}
        \phi_\mathrm{I}^\star(\mbx) &= \atantwo\left( \sqrt{ 1-\Lambda^2 },\Lambda \right), \label{eq:phi_I}
        \\
        \psi_\mathrm{I}^\star(\mbx) &= \atantwo \left( -\sqrt{ 1-\alpha^2 \Lambda^2 },\alpha \Lambda \right), \label{eq:psi_I}
    \end{align}
    where $\Lambda = 1/\sqrt{\alpha^2 + \frac{(\alpha^2-1)^2}{\alpha^2} \frac{y_D^2}{(x_I-x_D)^2}}$.
\end{theorem}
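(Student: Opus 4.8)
The plan is to prove Theorem \ref{theo:2} by the standard verification (sufficiency) argument for the Hamilton--Jacobi--Isaacs (HJI) equation: I would show that the candidate $V_\mathrm{I}$ in \eqref{eq:valfun} is a $C^1$ solution of the HJI equation on the interior of $\cX_\mathrm{I}$, that it carries the correct boundary data, and that the pointwise saddle of the Hamiltonian is attained by \eqref{eq:phi_I}--\eqref{eq:psi_I}; a verification theorem then upgrades the candidate to the genuine Value and its minimizer/maximizer to the equilibrium strategies. As a preliminary step I would confirm that \eqref{eq:valfun} is exactly the open-loop value \eqref{eq:valfun_open} after substituting the optimal parameters $\p^*,\delta^*$ of \eqref{eq:optpt}: with $p_y^*=0$ and $\delta^*=0$ one has $t_f=\eta_D^0/\alpha$ and $Y_I^0=-y_I^0$, and evaluating $\Gamma$ from \eqref{eq:F} at $p_x^*$ collapses $F$ to \eqref{eq:valfun}. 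This makes the candidate's provenance explicit and lets me devote the rest of the argument to sufficiency.

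Next I would set $R:=\sqrt{(x_I-x_D)^2/(\alpha^2-1)+y_D^2/\alpha^2}$ so that $V_\mathrm{I}=l-\tfrac12(y_I+R)$, and compute the gradient block by block: $\partial V_\mathrm{I}/\partial x_I=-(x_I-x_D)/(2R(\alpha^2-1))$, $\partial V_\mathrm{I}/\partial y_I=-\tfrac12$, $\partial V_\mathrm{I}/\partial x_D=(x_I-x_D)/(2R(\alpha^2-1))$, and $\partial V_\mathrm{I}/\partial y_D=-y_D/(2R\alpha^2)$. Because the Hamiltonian $\nabla V_\mathrm{I}\cdot\mbf$ is separable in $\phi$ and $\psi$, the Isaacs (min-max $=$ max-min) condition holds automatically, and its inner saddle aligns the Intruder's heading opposite to $(\partial_{x_I}V_\mathrm{I},\partial_{y_I}V_\mathrm{I})$ and the Defender's heading along $(\partial_{x_D}V_\mathrm{I},\partial_{y_D}V_\mathrm{I})$. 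Normalizing these blocks, I would verify $\cos\phi_\mathrm{I}^\star=\Lambda$ and $\cos\psi_\mathrm{I}^\star=\alpha\Lambda$, recovering \eqref{eq:phi_I}--\eqref{eq:psi_I} (up to the sign of $x_I-x_D$, which is fixed by the reflection symmetry of the problem). Substituting the saddle back, the HJI residual reduces to $-\|(\partial_{x_I}V_\mathrm{I},\partial_{y_I}V_\mathrm{I})\|+\alpha\|(\partial_{x_D}V_\mathrm{I},\partial_{y_D}V_\mathrm{I})\|$, and the defining identity $R^2=(x_I-x_D)^2/(\alpha^2-1)+y_D^2/\alpha^2$ forces these two norms to be equal, so the residual vanishes; this single algebraic fact does all the work.

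The delicate point, and the part I expect to be the main obstacle, is the boundary condition, because the payoff \eqref{eq:J_I} is realized at the switch time $t_s$ rather than at the terminal time on $\mT_r$. The correct terminal data for the Phase~I HJI is therefore imposed on the switching surface $\partial\cX_\mathrm{II}=\{\mbx:V_\mathrm{II}(\mbx)=0\}$, where the Intruder can no longer approach $\cT$ without forfeiting safe retreat. There I would show $V_\mathrm{I}=\Phi_\mathrm{I}=l-y_I$, which amounts to proving $R=y_I$ on $\partial\cX_\mathrm{II}$. Using Remark \ref{rem:proj} together with \eqref{eq:center}--\eqref{eq:radius}, membership in $\partial\cX_\mathrm{II}$ is the tangency relation $(\alpha^2 y_I-y_D)=\alpha\sqrt{(x_I-x_D)^2+(y_I-y_D)^2}$; eliminating $(x_I-x_D)^2$ from the definition of $R$ with this relation yields $R^2=y_I^2$, and since $y_I>0$ on the game domain, $R=y_I$ as required. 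This is where the two-phase decomposition and the Phase~II result of Theorem \ref{theo:1} genuinely enter the argument.

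Finally I would assemble the pieces through the verification theorem: $V_\mathrm{I}$ is $C^1$ wherever $R>0$, which holds on the relevant interior of $\cX_\mathrm{I}$; it solves the HJI equation; it matches the payoff on the switching surface; and the feedback laws \eqref{eq:phi_I}--\eqref{eq:psi_I} realize the pointwise saddle. Integrating the HJI identity along the closed-loop trajectory up to the switching surface then yields \eqref{eq:sadpointcond}: with the Defender fixed at $\psi_\mathrm{I}^\star$, $V_\mathrm{I}$ is nondecreasing along the flow, so $J_\mathrm{I}\geq V_\mathrm{I}(\mbx^0)$ for every $\phi$, while with the Intruder fixed at $\phi_\mathrm{I}^\star$, $V_\mathrm{I}$ is nonincreasing, so $J_\mathrm{I}\leq V_\mathrm{I}(\mbx^0)$ for every $\psi$. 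Together these are exactly the saddle-point condition, so $V_\mathrm{I}$ is the Value and \eqref{eq:phi_I}--\eqref{eq:psi_I} are the equilibrium strategies on $\cX_\mathrm{I}$.
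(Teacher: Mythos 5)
Your proposal is correct and follows essentially the same route as the paper: derive the candidate $V_\mathrm{I}$ and the feedback laws from the open-loop solution with the optimal parameters \eqref{eq:optpt}, compute $\partial V_\mathrm{I}/\partial\mbx$, and verify that the separable Hamiltonian saddle makes the HJI residual vanish (your identity equating $\|(\partial_{x_I}V_\mathrm{I},\partial_{y_I}V_\mathrm{I})\|$ with $\alpha\|(\partial_{x_D}V_\mathrm{I},\partial_{y_D}V_\mathrm{I})\|$ is exactly the computation the paper displays in coordinates via $\Lambda$). You go further than the paper in two respects that are worth keeping: the check that $R=y_I$ on the switching surface $\{V_\mathrm{II}=0\}$, so that $V_\mathrm{I}$ matches the terminal data $l-y_I$ there, and the caveat about the sign of $x_I-x_D$ in \eqref{eq:phi_I}--\eqref{eq:psi_I}; the paper's proof stops at the HJI identity and leaves both points implicit.
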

\begin{proof}
    The expression of $V_\mathrm{I}$ in \eqref{eq:valfun} is obtained by substituting \eqref{eq:optpt} into \eqref{eq:F}, and the strategies in \eqref{eq:phi_I} and \eqref{eq:psi_I} by substituting \eqref{eq:optpt} into \eqref{eq:phi_I_ol}, \eqref{eq:psi_I_ol}, \eqref{eq:costheta}, and \eqref{eq:sintheta_2}. The partial derivatives of $V_\mathrm{I}$ with respect to $\mbx$ over $\cX_\mathrm{I}$ are found as
    \begin{align} \label{eq:dVdx}
        \frac{\partial V_\mathrm{I}}{\partial \mbx} = -\frac{1}{2} \left[
        \frac{\Lambda}{\sqrt{1-\Lambda^2}} ~~ 1 ~~ -\frac{\Lambda}{\sqrt{1-\Lambda^2}} ~~ \frac{1}{\alpha} \sqrt{ \frac{1-\alpha^2 \Lambda^2}{1-\Lambda^2} }
        \right]
        .
    \end{align}
    The HJI equation is given as \cite{isaacs1965differential}
    \begin{align} \label{eq:hji}
        0 = \frac{\partial V_\mathrm{I}}{\partial t} + \min_{\phi}\max_{\psi} \left\{ \frac{\partial V_\mathrm{I}}{\partial \mbx} \cdot \mbf(\mbx,\phi,\psi) \right\}.
    \end{align}
    Since $V_\mathrm{I}$ is not an explicit function of time, we have $\frac{\partial V_\mathrm{I}}{\partial t} = 0$. Substituting \eqref{eq:phi_I}, \eqref{eq:psi_I}, and \eqref{eq:dVdx} into \eqref{eq:hji} yields
    \begin{align}
        &\min_\phi \max_\psi \left\{ \frac{\partial V_\mathrm{I}}{\partial \mbx} \cdot \mbf(\mbx,\phi,\psi) \right\} \nonumber
        \\
        &= -\frac{1}{2}
        \begin{bmatrix}
            \frac{\Lambda}{\sqrt{1-\Lambda^2}}
            \\
            1
            \\
            -\frac{\Lambda}{\sqrt{1-\Lambda^2}}
            \\
            \frac{1}{\alpha} \sqrt{ \frac{1-\alpha^2 \Lambda^2}{1-\Lambda^2} }
        \end{bmatrix}^\top
        \begin{bmatrix}
            \Lambda
            \\
            \sqrt{1-\Lambda^2}
            \\
            \alpha^2 \Lambda
            \\
            -\alpha \sqrt{1-\alpha^2 \Lambda^2}
        \end{bmatrix} = 0.
    \end{align}
    Since $V_\mathrm{I}$ is continuously differentiable and satisfies the HJI equation over $\cX_\mathrm{I}$, it is the Value function of the game.
\end{proof}

\subsection{Level Sets of Value Function and State Space Partitioning} \label{sec:level}

Figure \ref{fig:value} illustrates the level sets of the Value function described in \eqref{eq:valfun} and the winning regions of each agent, for which we select $[x_D^0,y_D^0]^\top = [0,0.6]^\top$, $\alpha = 1.1$, and $\l = 1$). Therein, the position of the Defender is marked by a teal circle, and the sets partitioning the state space are given as
\begin{align}
    \bar\cX_\mathrm{I} &= \left\{ [x,y]^\top \in \bar\cX_\mathrm{II} : V_\mathrm{I}([x,y,x_D^0,y_D^0]^\top) \geq 0 \right\},
    \\
    \bar\cX_\mathrm{II} &= \left\{ [x,y]^\top \in \bR^2 : V_\mathrm{II}([x,y,x_D^0,y_D^0]^\top) \geq 0 \right\},
    \\
    \bar\cB_\mathrm{I} &= \left\{ [x,y]^\top \in \bar\cX_\mathrm{II} : V_\mathrm{I}([x,y,x_D^0,y_D^0]^\top) = 0 \right\},
    \\
    \bar\cB_\mathrm{II} &= \left\{ [x,y]^\top \in \bR^2 : V_\mathrm{II}([x,y,x_D^0,y_D^0]^\top) = 0 \right\}.
\end{align}
The black curves $\bar\cB_\mathrm{I}$ and $\bar\cB_\mathrm{II}$ are the so-called \textit{barriers}, which demarcate the winning regions of each agent \cite{isaacs1965differential}. The gray region labelled by $\bar\cX_\mathrm{II}^\mathrm{c}$ ($= \bR^2 \backslash \bar\cX_\mathrm{II}$) is the Defender's winning region; if the Intruder starts in this set, the Defender can capture it by following the equilibrium strategies proposed in, for instance, \cite{lee2021optimal}. The set $\bar\cX_\mathrm{I}$ is the Intruder's winning region in which the Intruder is guaranteed to safely reconnoiter $\cT$ and then return to $\cR$. The white region $\bar\cX_\mathrm{II} \backslash \bar\cX_\mathrm{I}$ is where the Value is negative, where the game may have non-unique solutions.

\begin{figure}[htp]
    \centering
    \includegraphics[scale=0.36]{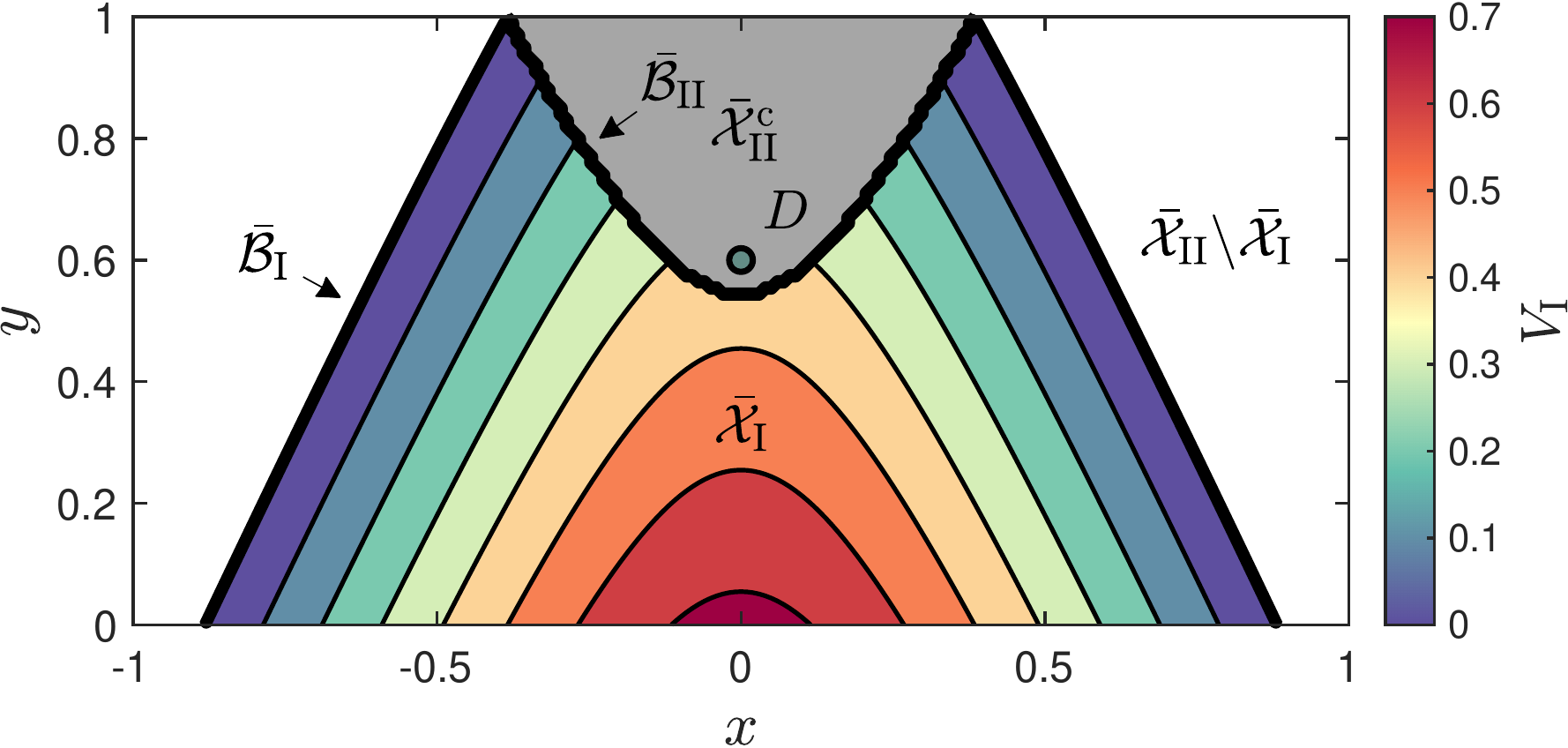}
    \caption{Level sets of $V_\mathrm{I}$ and partitioning of the game state space for the fixed Defender position given in Section \ref{sec:level}.}
    \label{fig:value}
\end{figure}

\section{Simulation Results} \label{sec:simresults}

In this section, three different game scenarios (S1-S3) are examined numerically and are illustrated in Figures \ref{fig:S1}-\ref{fig:S3}. In each subfigure, the filled and unfilled triangles (resp., circles) indicate the positions and initial positions of the Intruder (resp., the Defender) for the corresponding time interval. For simulation parameters, we select $[x_I^0,y_I^0]^\top = [0.5,0.1]^\top$, $[x_D^0,y_D^0]^\top = [0.1,0.9]^\top$, $\alpha=1.2$, and $l = 1$.

In S1, the Intruder (resp., Defender) adheres to its state-feedback equilibrium strategies, namely \eqref{eq:phi_I} for Phase I and \eqref{eq:phi_II} for Phase II (resp., \eqref{eq:psi_I} for Phase I and \eqref{eq:psi_II} for Phase II). See Figure \ref{fig:S1} for the resulting trajectories. As expected in Remark \ref{rem:defopthead}, the Defender's trajectory is a straight line that connects its initial and terminal positions. The Value obtained in this scenario is $0.4682$. 

\begin{figure}[htp!]
    \centering
    \subcaptionbox{$0 \leq t \leq 0.593$ \label{fig:1a}}{
    \begin{tikzpicture}
    \begin{axis}[
        font=\footnotesize,
        scatter/classes={
        a={mark=o,draw=black,thick,scale=0.8},
        b={mark=triangle,draw=black,thick,scale=1},
        c={mark=*,draw=black,thick,fill=teal!50!white,scale=0.8},
        d={mark=triangle*,draw=black,thick,fill=red!50!white,scale=1}
        },
        height=5cm, width=5cm,
        axis on top,
        xlabel=$x$, ylabel=$y$,
        xlabel style={anchor=south},
        ylabel style={anchor=south},
        axis x line=center,
        axis y line=center,
        xmin=0, xmax=1.6,
        ymin=0, ymax=1,
        xtick=\empty, ytick=\empty,
        axis equal
        ]
        \addplot [
        scatter,
        only marks,
        scatter src=explicit symbolic,
        ] coordinates {
        (0.1,0.9) [a]
        (0.5,0.1) [b]
        (0.6874,0.4962) [c]
        (0.9079,0.5318) [d]
        };
        \addplot [
        fill=purple!40!white,
        draw=none
        ] coordinates {
        (-0.2,0) (2,0) (2,-1) (-0.2,-1)
        }; 
        \addplot [
        fill=lightgray!60,
        draw=none
        ] coordinates {
        (-0.2,1) (2,1) (2,2) (-0.2,2)
        };  
        \addplot [
        black,
        domain=-0.2:2,
        samples=10,
        ]
        {1};
        \addplot [
            black,
            solid,
            mark = none,
            postaction={decorate},
            decoration={markings, 
            mark=at position 0.6 with {\arrow{stealth}}},
        ] table {data/x_a_1.dat};
        \addplot [
            black,
            solid,
            mark = none,
            postaction={decorate},
            decoration={markings, 
            mark=at position 0.6 with {\arrow{stealth}}},
        ] table {data/x_d_1.dat};
        \node [anchor=center,font=\small] at (0.8,1.15) {$\cT$}; 
        \node [anchor=center,font=\small] at (0.8,-0.15) {$\cR$}; 
        \node [anchor=north,font=\footnotesize] at (0.1,0.9) {$D$}; 
        \node [anchor=east,font=\footnotesize] at (0.5,0.1) {$I$}; 
        \end{axis}
    \end{tikzpicture}
    }
    ~
    \subcaptionbox{$0.593 \leq t \leq 1.321$ \label{fig:1b}}{
    \begin{tikzpicture}
    \begin{axis}[
        font=\footnotesize,
        scatter/classes={
        a={mark=o,draw=black,thick,scale=0.8},
        b={mark=triangle,draw=black,thick,scale=1},
        c={mark=*,draw=black,thick,fill=teal!50!white,scale=0.8},
        d={mark=triangle*,draw=black,thick,fill=red!50!white,scale=1}
        },
        height=5cm, width=5cm,
        axis on top,
        xlabel=$x$, ylabel=$y$,
        xlabel style={anchor=south},
        ylabel style={anchor=south},
        axis x line=center,
        axis y line=center,
        xmin=0, xmax=1.6,
        ymin=0, ymax=1,
        xtick=\empty, ytick=\empty,
        axis equal
        ]
        \addplot [
        scatter,
        only marks,
        scatter src=explicit symbolic,
        ] coordinates {
        (0.6874,0.4962) [a]
        (0.9079,0.5318) [b]
        (1.4091,0) [d]
        (1.4091,0) [c]
        };
        \addplot [
        fill=purple!40!white,
        draw=none
        ] coordinates {
        (-0.2,0) (2,0) (2,-1) (-0.2,-1)
        }; 
        \addplot [
        fill=lightgray!60,
        draw=none
        ] coordinates {
        (-0.2,1) (2,1) (2,2) (-0.2,2)
        };  
        \addplot [
        black,
        domain=-0.2:2,
        samples=10,
        ]
        {1};
            only marks,
            mark=+,
            mark size=2.5pt,
            draw=orange!60!white,
            very thick
            ]
            coordinates {
            (1.4091,0.6128)
            }; 
        \addplot [
            black,
            solid,
            mark = none,
            postaction={decorate},
            decoration={markings, 
            mark=at position 0.6 with {\arrow{stealth}}},
        ] table {data/x_a_2.dat};
        \addplot [
            black,
            solid,
            mark = none,
            postaction={decorate},
            decoration={markings, 
            mark=at position 0.6 with {\arrow{stealth}}},
        ] table {data/x_d_2.dat};
        \addplot [black,dashed] table {data/x_a_1.dat};
        \addplot [black,dashed] table {data/x_d_1.dat};
        \node [anchor=center,font=\small] at (0.8,1.15) {$\cT$}; 
        \node [anchor=center,font=\small] at (0.8,-0.15) {$\cR$}; 
        \node [anchor=south,font=\footnotesize] at (0.6874,0.4962) {$D$}; 
        \node [anchor=south,font=\footnotesize] at (0.9079,0.5318) {$I$}; 
        \end{axis}
    \end{tikzpicture}
    }
    \caption{ Optimal Intruder vs. optimal Defender (S1) }
    \label{fig:S1}
\end{figure}
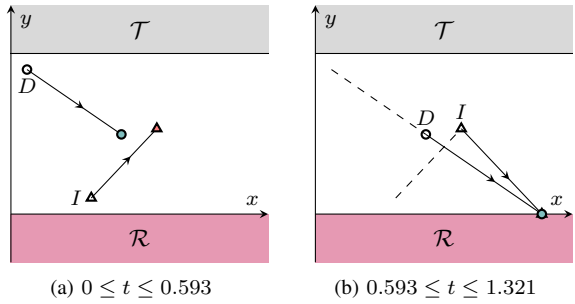

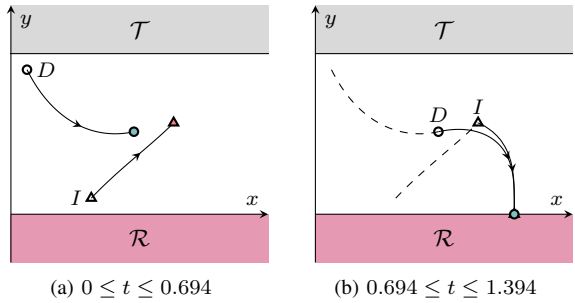
\begin{figure}[htp!]
    \centering
    \subcaptionbox{$0 \leq t \leq 0.694$ \label{fig:1a}}{
    \begin{tikzpicture}
    \begin{axis}[
        font=\footnotesize,
        scatter/classes={
        a={mark=o,draw=black,thick,scale=0.8},
        b={mark=triangle,draw=black,thick,scale=1},
        c={mark=*,draw=black,thick,fill=teal!50!white,scale=0.8},
        d={mark=triangle*,draw=black,thick,fill=red!50!white,scale=1}
        },
        height=5cm, width=5cm,
        axis on top,
        xlabel=$x$, ylabel=$y$,
        xlabel style={anchor=south},
        ylabel style={anchor=south},
        axis x line=center,
        axis y line=center,
        xmin=0, xmax=1.6,
        ymin=0, ymax=1,
        xtick=\empty, ytick=\empty,
        axis equal
        ]
        \addplot [
        scatter,
        only marks,
        scatter src=explicit symbolic,
        ] coordinates {
        (0.1,0.9) [a]
        (0.5,0.1) [b]
        (0.7654,0.5138) [c]
        (1.0126,0.5689) [d]
        };
        \addplot [
        fill=purple!40!white,
        draw=none
        ] coordinates {
        (-0.2,0) (2,0) (2,-1) (-0.2,-1)
        }; 
        \addplot [
        fill=lightgray!60,
        draw=none
        ] coordinates {
        (-0.2,1) (2,1) (2,2) (-0.2,2)
        };  
        \addplot [
        black,
        domain=-0.2:2,
        samples=10,
        ]
        {1};
        \addplot [
            black,
            solid,
            mark = none,
            postaction={decorate},
            decoration={markings, 
            mark=at position 0.6 with {\arrow{stealth}}},
        ] table {data/x_a_3.dat};
        \addplot [
            black,
            solid,
            mark = none,
            postaction={decorate},
            decoration={markings, 
            mark=at position 0.6 with {\arrow{stealth}}},
        ] table {data/x_d_3.dat};
        \node [anchor=center,font=\small] at (0.8,1.15) {$\cT$}; 
        \node [anchor=center,font=\small] at (0.8,-0.15) {$\cR$}; 
        \node [anchor=west,font=\footnotesize] at (0.1,0.9) {$D$}; 
        \node [anchor=east,font=\footnotesize] at (0.5,0.1) {$I$}; 
        \end{axis}
    \end{tikzpicture}
    }
    ~
    \subcaptionbox{$0.694 \leq t \leq 1.394$ \label{fig:1b}}{
    \begin{tikzpicture}
    \begin{axis}[
        font=\footnotesize,
        scatter/classes={
        a={mark=o,draw=black,thick,scale=0.8},
        b={mark=triangle,draw=black,thick,scale=1},
        c={mark=*,draw=black,thick,fill=teal!50!white,scale=0.8},
        d={mark=triangle*,draw=black,thick,fill=red!50!white,scale=1}
        },
        height=5cm, width=5cm,
        axis on top,
        xlabel=$x$, ylabel=$y$,
        xlabel style={anchor=south},
        ylabel style={anchor=south},
        axis x line=center,
        axis y line=center,
        xmin=0, xmax=1.6,
        ymin=0, ymax=1,
        xtick=\empty, ytick=\empty,
        axis equal
        ]
        \addplot [
        scatter,
        only marks,
        scatter src=explicit symbolic,
        ] coordinates {
        (0.7654,0.5138) [a]
        (1.0126,0.5689) [b]
        (1.2377,0) [d]
        (1.2377,0) [c]
        };
        \addplot [
        fill=purple!40!white,
        draw=none
        ] coordinates {
        (-0.2,0) (2,0) (2,-1) (-0.2,-1)
        }; 
        \addplot [
        fill=lightgray!60,
        draw=none
        ] coordinates {
        (-0.2,1) (2,1) (2,2) (-0.2,2)
        };  
        \addplot [
        black,
        domain=-0.2:2,
        samples=10,
        ]
        {1};
        \addplot [
            black,
            solid,
            mark = none,
            postaction={decorate},
            decoration={markings, 
            mark=at position 0.6 with {\arrow{stealth}}},
        ] table {data/x_a_4.dat};
        \addplot [
            black,
            solid,
            mark = none,
            postaction={decorate},
            decoration={markings, 
            mark=at position 0.6 with {\arrow{stealth}}},
        ] table {data/x_d_4.dat};
        \addplot [black,dashed] table {data/x_a_3.dat};
        \addplot [black,dashed] table {data/x_d_3.dat};
        \node [anchor=center,font=\small] at (0.8,1.15) {$\cT$}; 
        \node [anchor=center,font=\small] at (0.8,-0.15) {$\cR$}; 
        \node [anchor=south,font=\footnotesize] at (0.7654,0.5138) {$D$}; 
        \node [anchor=south,font=\footnotesize] at (1.0126,0.5689) {$I$}; 
        \end{axis}
    \end{tikzpicture}
    }
    \caption{ Optimal Intruder vs. nonoptimal Defender (S2) }
    \label{fig:S2}
\end{figure}

\begin{figure}[htp!]
    \centering
    \subcaptionbox{$0 \leq t \leq 0.387$ \label{fig:1a}}{
    \begin{tikzpicture}
    \begin{axis}[
        font=\footnotesize,
        scatter/classes={
        a={mark=o,draw=black,thick,scale=0.8},
        b={mark=triangle,draw=black,thick,scale=1},
        c={mark=*,draw=black,thick,fill=teal!50!white,scale=0.8},
        d={mark=triangle*,draw=black,thick,fill=red!50!white,scale=1}
        },
        height=5cm, width=5cm,
        axis on top,
        xlabel=$x$, ylabel=$y$,
        xlabel style={anchor=south},
        ylabel style={anchor=south},
        axis x line=center,
        axis y line=center,
        xmin=0, xmax=1.6,
        ymin=0, ymax=1,
        xtick=\empty, ytick=\empty,
        axis equal
        ]
        \addplot [
        scatter,
        only marks,
        scatter src=explicit symbolic,
        ] coordinates {
        (0.1,0.9) [a]
        (0.5,0.1) [b]
        (0.4140,0.5630) [c]
        (0.5,0.4880) [d]
        };
        \addplot [
        fill=purple!40!white,
        draw=none
        ] coordinates {
        (-0.2,0) (2,0) (2,-1) (-0.2,-1)
        }; 
        \addplot [
        fill=lightgray!60,
        draw=none
        ] coordinates {
        (-0.2,1) (2,1) (2,2) (-0.2,2)
        };  
        \addplot [
        black,
        domain=-0.2:2,
        samples=10,
        ]
        {1};
        \addplot [
            black,
            solid,
            mark = none,
            postaction={decorate},
            decoration={markings, 
            mark=at position 0.6 with {\arrow{stealth}}},
        ] table {data/x_a_5.dat};
        \addplot [
            black,
            solid,
            mark = none,
            postaction={decorate},
            decoration={markings, 
            mark=at position 0.6 with {\arrow{stealth}}},
        ] table {data/x_d_5.dat};
        \node [anchor=center,font=\small] at (0.8,1.15) {$\cT$}; 
        \node [anchor=center,font=\small] at (0.8,-0.15) {$\cR$}; 
        \node [anchor=north,font=\footnotesize] at (0.1,0.9) {$D$}; 
        \node [anchor=east,font=\footnotesize] at (0.5,0.1) {$I$}; 
        \end{axis}
    \end{tikzpicture}
    }
    ~
    \subcaptionbox{$0.387 \leq t \leq 0.903$ \label{fig:1b}}{
    \begin{tikzpicture}
    \begin{axis}[
        font=\footnotesize,
        scatter/classes={
        a={mark=o,draw=black,thick,scale=0.8},
        b={mark=triangle,draw=black,thick,scale=1},
        c={mark=*,draw=black,thick,fill=teal!50!white,scale=0.8},
        d={mark=triangle*,draw=black,thick,fill=red!50!white,scale=1}
        },
        height=5cm, width=5cm,
        axis on top,
        xlabel=$x$, ylabel=$y$,
        xlabel style={anchor=south},
        ylabel style={anchor=south},
        axis x line=center,
        axis y line=center,
        xmin=0, xmax=1.6,
        ymin=0, ymax=1,
        xtick=\empty, ytick=\empty,
        axis equal
        ]
        \addplot [
        scatter,
        only marks,
        scatter src=explicit symbolic,
        ] coordinates {
        (0.4140,0.5630) [a]
        (0.5,0.4880) [b]
        (0.6954,0) [d]
        (0.6954,0) [c]
        };
        \addplot [
        fill=purple!40!white,
        draw=none
        ] coordinates {
        (-0.2,0) (2,0) (2,-1) (-0.2,-1)
        }; 
        \addplot [
        fill=lightgray!60,
        draw=none
        ] coordinates {
        (-0.2,1) (2,1) (2,2) (-0.2,2)
        };  
        \addplot [
        black,
        domain=-0.2:2,
        samples=10,
        ]
        {1};
        \addplot [
            black,
            solid,
            mark = none,
            postaction={decorate},
            decoration={markings, 
            mark=at position 0.6 with {\arrow{stealth}}},
        ] table {data/x_a_6.dat};
        \addplot [
            black,
            solid,
            mark = none,
            postaction={decorate},
            decoration={markings, 
            mark=at position 0.6 with {\arrow{stealth}}},
        ] table {data/x_d_6.dat};
        \addplot [black,dashed] table {data/x_a_5.dat};
        \addplot [black,dashed] table {data/x_d_5.dat};
        \node [anchor=center,font=\small] at (0.8,1.15) {$\cT$}; 
        \node [anchor=center,font=\small] at (0.8,-0.15) {$\cR$}; 
        \node [anchor=south west,font=\footnotesize] at (0.4140,0.5630) {$D$}; 
        \node [anchor=west,font=\footnotesize] at (0.5,0.4880) {$I$}; 
        \end{axis}
    \end{tikzpicture}
    }
    \caption{ Nonoptimal Intruder vs. optimal Defender (S3) }
    \label{fig:S3}
\end{figure}
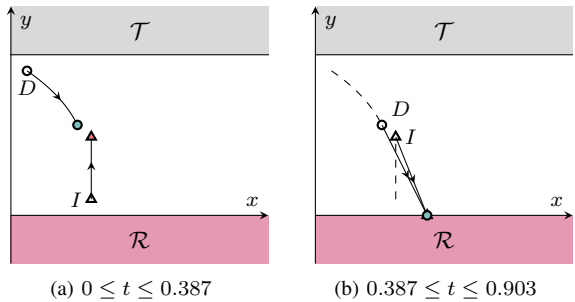

In S2, the Intruder plays optimally (i.e., it employs the equilibrium strategies \eqref{eq:phi_I} and \eqref{eq:phi_II}), whereas the Defender adopts the pure pursuit strategy \cite{isaacs1965differential}. As depicted in Figure \ref{fig:S2}, the Intruder approaches $\cT$ closer than in S1 due to the Defender's nonoptimal play, resulting in the smaller Value, which is $0.4311$. Interestingly, the trajectory of the Intruder during the second phase is not a straight line since chattering occurs (the game alternates between Phases I and II).

In S3, the Defender plays optimally, whereas the Intruder moves directly towards $\cT$ while safe retreat is guaranteed. Figure \ref{fig:S3} shows that the Intruder cannot approach $\cT$ as close as in S1 with the Value of $0.5120$.

Lastly, Figure \ref{fig:values} shows the evolution of the minimum distance between the Intruder and $\cT$ over time. The Intruder acquires the smallest Value in S2, followed by S1 and S3. This shows that the saddle-point condition \eqref{eq:sadpointcond} is satisfied.

\section{Conclusion} \label{sec:conclusions}

In this paper, the two-player reconnaissance game with half-planar target and retreat regions was considered. The game was decomposed into two phases and solved analytically using differential game methods. Simulation results were presented to showcase the performance of the solution.

\begin{figure}
    \centering
    \begin{tikzpicture}
        \begin{axis}[
            legend style={font=\footnotesize},
            major grid style={dotted,black},
            font=\footnotesize,
            width=7cm,
            height=3.5cm,
            xlabel={$t$},
            ylabel={$\min\limits_{\tau \in [0,t]} (l-y_I(\tau))$},
            axis lines = left,
            grid=major,
            legend entries={S1,S2,S3},
            legend pos=north east,
            legend columns=-1,
            ymin=0.4, ymax=0.9,
        ]
        \addplot [
            black!70,
            mark color=black,
            mark=halfcircle*,
            mark repeat={20},
            thick,
        ] table {data/value_S1.dat};
        \addplot [
            red!70,
            mark color=red,
            mark=halfsquare*,
            mark repeat={20},
            thick,
            ] table {data/value_S2.dat};
        \addplot [
            teal!70,
            mark color=teal,
            mark=halfsquare left*,
            mark repeat={20},
            thick,
            ] table {data/value_S3.dat};
        \end{axis}
    \end{tikzpicture}
    \caption{Minimum distance between the Intruder and $\cT$ versus time.}
    \label{fig:values}
\end{figure}
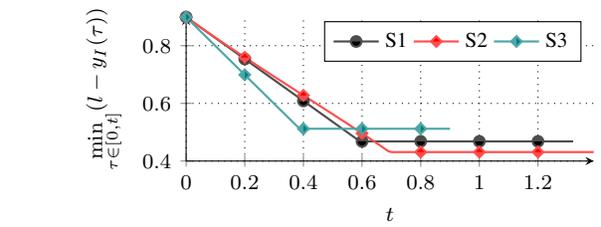


\bibliographystyle{ieeetr}
\bibliography{pegref}

\begin{thebibliography}{10}

\bibitem{isaacs1965differential}
R.~Isaacs, {\em Differential Games: A Mathematical Theory with Applications to
  Warfare and Pursuit, Control and Optimization}.
\newblock New York, NY: Wiley, 1965.

\bibitem{weintraub2020introduction}
I.~E. Weintraub, M.~Pachter, and E.~Garcia, ``An introduction to
  pursuit-evasion differential games,'' in {\em 2020 Amer. Control Conf.
  (ACC)}, pp.~1049--1066, IEEE, 2020.

\bibitem{li2011defending}
D.~Li and J.~B. Cruz, ``Defending an asset: a linear quadratic game approach,''
  {\em IEEE Trans. Aerosp. Electron. Syst.}, vol.~47, no.~2, pp.~1026--1044,
  2011.

\bibitem{chipade2019multiplayer}
V.~S. Chipade and D.~Panagou, ``Multiplayer target-attacker-defender
  differential game: pairing allocations and control strategies for guaranteed
  intercept,'' in {\em AIAA SciTech 2019 Forum}, p.~0658, 2019.

\bibitem{yan2018reach}
R.~Yan, Z.~Shi, and Y.~Zhong, ``Reach-avoid games with two defenders and one
  attacker: An analytical approach,'' {\em IEEE Trans. Cybern.}, vol.~49,
  no.~3, pp.~1035--1046, 2018.

\bibitem{von2020multiple}
A.~Von~Moll, E.~Garcia, D.~Casbeer, M.~Suresh, and S.~C. Swar,
  ``Multiple-pursuer, single-evader border defense differential game,'' {\em J.
  Aerosp. Inf. Syst.}, vol.~17, no.~8, pp.~407--416, 2020.

\bibitem{garcia2020multiple}
E.~Garcia, D.~W. Casbeer, A.~Von~Moll, and M.~Pachter, ``Multiple pursuer
  multiple evader differential games,'' {\em IEEE Trans. Automat. Control},
  vol.~66, no.~5, pp.~2345--2350, 2020.

\bibitem{bajaj2019dynamic}
S.~Bajaj and S.~D. Bopardikar, ``Dynamic boundary guarding against radially
  incoming targets,'' in {\em 2019 IEEE Conf. Decis. Control (CDC)},
  pp.~4804--4809, IEEE, 2019.

\bibitem{fu2020guarding}
H.~Fu and H.~H.-T. Liu, ``Guarding a territory against an intelligent intruder:
  Strategy design and experimental verification,'' {\em IEEE/ASME Trans.
  Mechatron.}, vol.~25, no.~4, pp.~1765--1772, 2020.

\bibitem{lee2021optimal}
Y.~Lee and E.~Bakolas, ``Optimal strategies for guarding a compact and convex
  target set: A differential game approach,'' in {\em 2021 IEEE Conf. Decis.
  Control (CDC)}, pp.~4320--4325, 2021.

\bibitem{lee2021guarding}
Y.~Lee and E.~Bakolas, ``Guarding a convex target set from an attacker in
  {E}uclidean spaces,'' {\em IEEE Contr. Syst. Lett.}, vol.~6, pp.~1706--1711,
  2021.

\bibitem{fu2023justification}
H.~Fu and H.~H.-T. Liu, ``Justification of the geometric solution of a target
  defense game with faster defenders and a convex target area using the {HJI}
  equation,'' {\em Automatica}, vol.~149, p.~110811, 2023.

\bibitem{shishika2018local}
D.~Shishika and V.~Kumar, ``Local-game decomposition for multiplayer
  perimeter-defense problem,'' in {\em 2018 IEEE Conf. Decis. Control (CDC)},
  pp.~2093--2100, 2018.

\bibitem{von2022circular}
A.~Von~Moll, M.~Pachter, D.~Shishika, and Z.~Fuchs, ``Circular target defense
  differential games,'' {\em IEEE Trans. Automat. Control}, 2022.

\bibitem{von2022turret}
A.~Von~Moll, D.~Shishika, Z.~Fuchs, and M.~Dorothy, ``Turret-runner-penetrator
  differential game with role selection,'' {\em IEEE Trans. Aerosp. Electron.
  Syst.}, 2022.

\bibitem{das2022guarding}
G.~Das and D.~Shishika, ``Guarding a translating line with an attached
  defender,'' in {\em 2022 Amer. Control Conf. (ACC)}, pp.~4436--4442, IEEE,
  2022.

\bibitem{liang2019differential}
L.~Liang, F.~Deng, Z.~Peng, X.~Li, and W.~Zha, ``A differential game for
  cooperative target defense,'' {\em Automatica}, vol.~102, pp.~58--71, 2019.

\bibitem{8340791}
E.~Garcia, D.~W. Casbeer, and M.~Pachter, ``Design and analysis of
  state-feedback optimal strategies for the differential game of active
  defense,'' {\em IEEE Trans. Automat. Control}, vol.~64, no.~2, pp.~553--568,
  2019.

\bibitem{liang2022reconnaissance}
L.~Liang, F.~Deng, J.~Wang, M.~Lu, and J.~Chen, ``A reconnaissance penetration
  game with territorial-constrained defender,'' {\em IEEE Trans. Automat.
  Control}, 2022.

\bibitem{plante1972reconnaissance}
D.~E. Plante and A.~H. Levis, ``On the reconnaissance game,'' {\em IEEE Trans.
  Automat. Control}, vol.~17, no.~6, pp.~831--833, 1972.

\bibitem{10018228}
I.~E. Weintraub, A.~Von~Moll, E.~Garcia, D.~W. Casbeer, and M.~Pachter,
  ``Surveillance of a faster fixed-course target,'' {\em IEEE Trans. Aerosp.
  Electron. Syst.}, pp.~1--12, 2023.

\bibitem{fuchs2016generalized}
Z.~E. Fuchs and P.~P. Khargonekar, ``Generalized engage or retreat differential
  game with escort regions,'' {\em IEEE Trans. Automat. Control}, vol.~62,
  no.~2, pp.~668--681, 2016.

\bibitem{huang2014automation}
H.~Huang, J.~Ding, W.~Zhang, and C.~J. Tomlin, ``Automation-assisted
  capture-the-flag: A differential game approach,'' {\em IEEE Trans. Control
  Syst. Technol.}, vol.~23, no.~3, pp.~1014--1028, 2014.

\bibitem{garcia2018capture}
E.~Garcia, D.~W. Casbeer, and M.~Pachter, ``The capture-the-flag differential
  game,'' in {\em 2018 IEEE Conf. Decis. Control (CDC)}, pp.~4167--4172, IEEE,
  2018.

\end{thebibliography}

\end{document}